\documentclass{article}

\PassOptionsToPackage{numbers, compress}{natbib}

\usepackage[preprint]{neurips_2026}

\usepackage[utf8]{inputenc} % allow utf-8 input
\usepackage[T1]{fontenc}    % use 8-bit T1 fonts
\usepackage[colorlinks,citecolor=blue]{hyperref}       % hyperlinks
\usepackage{url}            % simple URL typesetting
\usepackage{booktabs}       % professional-quality tables
\usepackage{amsfonts}       % blackboard math symbols
\usepackage{nicefrac}       % compact symbols for 1/2, etc.
\usepackage{microtype}      % microtypography
\usepackage[dvipsnames]{xcolor}         % colors
\usepackage{amsmath}
\usepackage{amsthm}
\usepackage[linesnumbered,ruled,lined,noend]{algorithm2e}
\usepackage{mathtools}
\usepackage{multirow}
\usepackage{subcaption}
\usepackage{wrapfig}
\usepackage{tabularx}
\usepackage{makecell}

\DeclareMathOperator*{\softmax}{softmax}
\DeclareMathOperator*{\argmax}{arg\,max}

\theoremstyle{plain}
\newtheorem{proposition}{Proposition}
\newtheorem{theorem}{Theorem}

\theoremstyle{remark}

\title{Watermarking Game-Playing Agents in \\ Perfect-Information Extensive-Form Games}

\author{%
  Juho Kim \\
  Computer Science Department \\
  Carnegie Mellon University \\
  \texttt{juhok@cs.cmu.edu} \\
  \And
  Fei Fang \\
  Software and Societal Systems Department \\
  Carnegie Mellon University \\
  \texttt{feifang@cmu.edu} \\
  \And
  Tuomas Sandholm \\
  Computer Science Department, CMU \\
  Strategic Machine, Inc. \\
  Strategy Robot, Inc. \\
  Optimized Markets, Inc. \\
  \texttt{sandholm@cs.cmu.edu} \\
}

\begin{document}

\maketitle

\begin{abstract}
	Watermarking techniques for large language models (LLMs), which encode hidden information in the output so its source can be verified, have gained significant attention in recent days, thanks to their potential capability to detect accidental or deliberate misuse.
	Similar challenges involving model misuse also exist in the context of game-playing, such as when detecting the unauthorized use of AI tools in gaming platforms (\textit{e.g.}, cheating in online chess).
	In this paper, we initiate the study of how game-playing strategies can be watermarked.
    We show how the KGW watermark for LLMs can be adapted to watermark game-playing agents in perfect-information extensive-form games.
	The watermark can then be detected using a statistical test.
	We show that the degradation in the quality of the watermarked strategy profile, quantified by the expected utility, can be bounded, but there is a tradeoff between detectability and quality.
	In our experiments, we bootstrap the watermarking framework to various chess engines and demonstrate that a) the impact of the watermark on the quality of the strategy is negligible and b) the watermark can be detected with just a handful of games.
\end{abstract}

\section{Introduction}
\label{sec:introduction}

In the past decades, breakthroughs in the field of artificial intelligence (AI) led to the development of superhuman AI agents capable of beating top human experts in games like chess~\cite{campbelletal}, Go~\cite{silveretal}, poker~\cite{brownandsandholm2018,brownandsandholm2019}, and dark chess~\cite{zhangandsandholm}.
As strong as they are in performance, game-playing AI agents can potentially grow even stronger as new algorithms develop and computing power increases.
As such, there have been growing concerns about their unauthorized usage, presenting the need for the means to embed and detect a robust, unique signature, \textit{i.e.}, a watermark, in their gameplay.

Watermarking techniques can be useful to communities for games like chess and poker, which face a rampant \textbf{cheating} problem~\cite{chappell,grant}, especially online, as it can be used to detect improper usage of AI.
Additionally, the developers of AI tools themselves can use a watermark for license enforcement and \textbf{intellectual property (IP) protection}.
In 2021, the development team behind the open-source chess engine Stockfish filed a lawsuit against the developers of Houdini 6, a proprietary chess engine, accusing them of code plagiarism and license non-compliance~\cite{stockfish}.
One central piece of evidence included Houdini's leaked source code; however, had a hidden watermark been used beforehand, they would have had another, potentially more irrefutable evidence of IP theft.
A watermark can also deter other types of model theft, such as distillation learning, thanks to its radioactive nature~\cite{sander}: a watermark remains detectable in the outputs of student models trained from a watermarked teacher model.
In another use case, AI researchers interested in mimicking human behavior in games can use a watermark detector for \textbf{data cleaning} to sanitize datasets that are contaminated with AI gameplays.

There exists a vast literature~\cite{dathathrietal,kirchenbaueretal} for watermarking texts generated by large language models (LLMs), that is, encoding hidden signals in LLM outputs so their source can be verified.
In this paper, we propose adapting LLM watermarks for game playing.
Specifically, we explore adapting the KGW watermark, which is the most prominent and widely established method (later in our discussions, we justify this choice and discuss adapting alternative LLM watermarks).
LLM watermarks were developed in response to the meteoric rise of LLMs in popularity and their strong performance in broad natural language processing tasks to address concerns about the potential risks of their accidental or deliberate misuse.
Real-life examples of this include LLM-operated bot networks in social media platforms~\cite{qiaoetal}, large-scale phishing scams~\cite{schulz}, and the spread of misinformation~\cite{lancet}.
Additionally, revelations about the collapse of AI models when trained repeatedly on synthetic data~\cite{shumailovetal} present the necessity to detect LLM-generated outputs in LLM training sets.

Curiously, text generation and game-playing have \textbf{striking similarities}, as summarized in Table~\ref{tab:analogy}, which allow LLM watermarks to be readily adapted to game-playing.
Just as LLMs model the probability distribution over tokens given a context or a prompt as a prior, a strategy profile provides the probability distribution over the set of available actions at a particular history or information set (infoset).
The text generation process ends upon sampling the end-of-sequence (EOS) token.
This is analogous to how game-playing agents apply actions until they reach a terminal history.
\begin{wraptable}{R}{0.5\linewidth}
	\centering
	\caption{Analogies between text generation and game playing.}
	\label{tab:analogy}
	\begin{tabular}{rcl}
		\toprule
		Text generation & & Game playing \\
		\midrule
		LLM & $\longleftrightarrow$ & Strategy profile \\
		Token & $\longleftrightarrow$ & Action \\
		Context/Prompt & $\longleftrightarrow$ & History/Infoset \\
		EOS token & $\longleftrightarrow$ & Terminal history \\
		Text quality & $\longleftrightarrow$ & \makecell{Expected utility \\ /Exploitability} \\
		Sentence entropy & $\longleftrightarrow$ & Mobility/Purity \\
		\bottomrule
	\end{tabular}
\end{wraptable}

In LLM watermarking, a watermark distorts the LLM's modeled probability distribution and thus can potentially degrade the quality of the generated text, typically quantified by calculating the perplexity cost~\cite{kirchenbaueretal} or conducting user feedback studies~\cite{dathathrietal}.
Naturally, such a concern carries over to game playing as well, where a watermark's distortion of action probabilities can reduce an agent's expected utility or make the agent vulnerable to exploitation.

There are certain situations in text generation where watermarking must be forgone, namely when dealing with low sentence entropy.
As an example, consider the Python code snippet `\verb|print("Hello,|', which is most likely followed by `\verb|World!")|'.
Here, the LLM watermark should more or less preserve the LLMs' original probability distributions.
Similarly, a watermarked strategy profile should retain its original strategy in situations with low mobility or when the action probability distribution is pure or almost pure.
A real-life example of this would be in chess openings or endgames, where the set of correct actions is limited and playing incorrect actions can be detrimental.

However, \textbf{notable differences} exist between text generation and game playing.
Whereas in text generation, a single LLM generates the text, in game-playing, multiple players, each employing their own strategies, apply actions sequentially.
Also, while an attacker in text generation can attempt to remove the watermark as a post-processing step, an analogous attack in game playing is impractical because a) undoing an action is usually not allowed in competitive settings, b) other players and, if present, a neutral observer can record the gameplay, and c) utilities are typically realized to players as soon as the game ends.
In short, offline watermark removal attacks, although technically possible, are not relevant in game playing.
Therefore, the attacker is limited to online attacks only, that is, removing the watermark for a single pending action each time the attacker is in turn to act.

That said, game playing also introduces \textbf{new challenges} that are not present in text generation.
While a typical LLM user may be satisfied with simply having the tokens sampled to them, a user of a game solver or an engine may not be content with simply having actions sampled to them.
Indeed, they may also be interested in exploring alternative actions and comparing them with the recommended action using (estimated) expected utilities.
Thus, for our use case, the watermark must be expanded so it can also return the associated expected utilities for different actions or histories.

\subsection{Our contributions}

In this paper, we initiate the study of watermarking strategies for gameplay. We show how the widely established KGW watermark technique~\cite{kirchenbaueretal} for LLMs can be adapted for perfect-information extensive-form games.
We also describe how to address the new challenges introduced by this setting, namely, returning the (estimated) expected utilities without compromising the watermark.
We show that, although watermarking can distort the action probability distribution, the degradation in the quality of the watermarked strategy profile, quantified by the expected utility, can be bounded; however, there exists a tradeoff between detectability and quality.
In our experiments, we bootstrap the watermark to mainstream chess engines to a) demonstrate that the Elo differences between the chess engines and their watermarked counterparts are negligible, b) show that the watermark can be detected with just a handful of games, and c) ablate on the watermark parameters.

\section{Background and related work}

We review game-theoretic concepts, LLM watermarking, and cryptographic applications to games.

\subsection{Perfect-information extensive-form games (EFGs)}

A \textit{perfect-information extensive-form game (EFG)} contains a set of players $P$, which includes the chance player $p_c$, and a set of histories $H$.
A history is a sequence of actions, and each history $h$ is associated with a player in turn to act $p(h)$ and a set of available actions $A(h)$.
We use $h \cdot a$ to denote applying $a \in A(h)$ at $h$.
Also, we use $h \sqsubseteq h'$ to denote that $h$ is a prefix of $h'$.
When $p(h) = p_c$, \textit{i.e.}, $h$ is a chance history, then $f_c(h, \cdot)$ gives a probability distribution over $A(h)$.
$Z \subseteq H$ is a set of terminal histories.
Every terminal history $z \in Z$ is associated with $|P|$ utilities (one for each player), and we use $u_i(z)$ to denote the value for $i \in P$.

Each player $i \in P$ plays with strategy $\sigma_i$ that assigns a probability distribution over the set of available actions in every history $h$ where $p(h) = i$.
A strategy profile $\sigma$ is defined as a $|P|$-tuple of strategies from all players.
We denote by $u_i(\sigma)$ the expected utility of $i$ when players play according to $\sigma$.
We use $\sigma_{-i}$ to denote the strategies of all players except that of $i$.
A best response of $i$ to $\sigma_{-i}$ is denoted as $BR(\sigma_{-i})$ and satisfies $u_i(BR(\sigma_{-i}), \sigma_{-i}) = \max_{\sigma_i'}{u_i(\sigma_i', \sigma_{-i})}$.
A traditional solution concept in perfect-information EFGs is a subgame perfect Nash equilibrium (SPNE) where no player stands to gain by best responding individually, even in states reached with zero probabilities.

An SPNE can be calculated using backward induction or approximated using algorithms such as alpha-beta pruning.
Backward induction is a dynamic programming technique that propagates game utilities from the leaves to the root of the game tree.
At each parent node, the corresponding player selects a child node that yields the maximum expected utility and sets the parent's values as the child's.
In the alpha-beta pruning family of algorithms, with a modern example being AlphaZero~\cite{silveretal}, the expected utility is replaced with its heuristic estimate after a certain depth, and pruning is performed on the game tree to cut down the number of explored nodes.
Typically, the heuristic function is either hand-crafted or is underlied by a neural network trained via reinforcement learning.

\subsection{Watermarking LLMs}
\label{subsec:watermarking-llms}

We begin by formalizing \textit{text generation}.
Let $\mathcal{V}$ be the vocabulary, \textit{i.e.}, the set of tokens (words or parts of words).
An LLM is designed so that, given a sequence of tokens $v_1, v_2, \hdots, v_{t - 1} \in \mathcal{V}$ as a prompt, outputs a probability distribution $\vec{p} \in \Delta^\mathcal{V}$ over $\mathcal{V}$.
From $\vec{p}$, the next token $v_t$ is sampled, which is then appended to the input sequence.
The resulting sequence is passed back to the LLM to obtain $v_{t + 1}$.
The text generation process continues in a loop until an EOS token is encountered.

While there are several methods to watermark LLM outputs~\cite{dathathrietal,kirchenbaueretal,changetal}, we center our focus on the widely established \textit{KGW watermark}.
The KGW watermark is flexible in that it can not only be bootstrapped to an existing LLM as a proxy but also be made public, \textit{i.e.}, all of its parameters can be released to the public, allowing any interested third parties to detect the watermark without any access to the LLM.
Optionally, the watermark can be employed privately with a secret key, so the detection service can be provided through a secure API.
In addition, the KGW watermark can be detected with a statistical test.
By adapting this watermark for game playing, we can inherit its desirable properties.

The pseudocode of the KGW watermark is available in Appendix~\ref{sec:kgw-watermark}.
It accepts the following parameters: a \textcolor{Green}{\textbf{green}} list size $\gamma \in (0, 1)$ and hardness $\delta \ge 0$.
The watermark begins by querying the LLM with the input prompt to obtain logits $\vec{l} \in \mathbb{R}^\mathcal{V}$ over the vocabulary.
Then, the hash of $v_{t - 1}$ is used to seed a pseudo-random number generator.
The generator is then used to randomly partition $\mathcal{V}$ into a \textcolor{Green}{\textbf{green}} list $G$ of size $\gamma|\mathcal{V}|$ and a \textcolor{Red}{\textbf{red}} list $R$ of size $(1 - \gamma)|\mathcal{V}|$.
The watermark then modifies $\vec{l}$ to introduce a bias toward selecting a token in $G$ as opposed to $R$ during the sampling process.
This is achieved by adding $\delta$ to the logits corresponding to the tokens in $G$.
Finally, $\vec{l}$ is passed to a softmax function to obtain probabilities $\vec{p} \in \Delta^\mathcal{V}$, from which $v_t$ is sampled.

\subsection{Steganography in game-playing}

Watermarking is a special case of \textit{steganography}, which encompasses methods of hiding secret data in a public medium such as text and images.
One medium of secret communication explored by several prior works is gameplay data~\cite{changandechizen,hernandezcastroetal,mandujanoetal}, but their techniques are tailored for encoding \textbf{generic} secret data, and thus are not specialized enough for watermarking.
Plus, a notable difference between our contribution and the prior works of~\citet{mandujanoetal} and \citet{changandechizen}~is that their algorithms are not designed to be bootstrapped to existing player policies.
Moreover,~\citet{mandujanoetal} control both the player and the game environment (\textit{i.e.}, the nature player), which is not representative of typical game-theoretic settings.
Additionally, unlike ours, these prior works do not provide any game-theoretic guarantees, such as bounds on the loss in expected utility.

\section{Methodology}

We are now ready to present our results.
We begin by discussing how one can adapt the KGW watermark~\cite{kirchenbaueretal} for strategy profiles in perfect-information EFGs.

\subsection{Watermarking strategy profiles}

\begin{algorithm}[t!]
	\caption{KGW watermark adapted for game playing}
	\label{alg:watermark}
	\textbf{Input:} strategy profile $\sigma$, \textcolor{Green}{\textbf{green}} list size $\gamma \in (0, 1)$, and hardness $\delta \ge 0$.
	\BlankLine
	\textsc{NextAction}(history $h \in H$): \\
	\Indp{
		Query the underlying model of $\sigma$ for the expected utilities $\vec{u} \in \mathbb{R}^{A(h)}$ over available actions at $h$. \\
		\BlankLine
		Seed a pseudo-random number generator with the hash of the observation made at $h$. \\
		\BlankLine
		Randomly partition $A(h)$ into a \textcolor{Green}{\textbf{green}} list $G$ of size $\gamma|A(h)|$ and a \textcolor{Red}{\textbf{red}} list $R$ of size $(1 - \gamma)|A(h)|$. \\
		\BlankLine
		$\vec{v} \coloneq \vec{u} + \left(\begin{cases} \delta & a \in G \\ -\gamma\delta/(1 - \gamma) & a \in R \end{cases}\right)_{a \in A(h)}$\;
		$a^* \coloneq \argmax_{a \in A(h)}{\vec{v}_a}$\;
		\Return $a^*$\;
	}
	\Indm
\end{algorithm}

Whereas, in text generation, the watermark acts as a proxy to an LLM, in game playing, it acts as a wrapper to a strategy profile $\sigma$, calculated by an underlying game-solving algorithm.
Akin to querying the LLM for logits over the vocabulary, our watermark queries the underlying model of $\sigma$ with history $h$ to calculate (or estimate) the expected utilities over the set of available actions $A(h)$.

We then use the hash of the observation made at $h$ to seed the pseudo-random number generator.
This deviates from the original KGW algorithm, which used the final token of the prompt (an analogue of the final action) to seed the pseudo-random number generator.
We begin by justifying why we \textbf{can} make such a change.
In the original KGW watermark, the motivation for seeding with the final token was to make the watermark robust against splicing attacks and hence remain detectable even with a contiguous substring of the generated text.
In game-playing, this consideration is no longer relevant, as splicing attacks are impossible due to the reasons stated in the introduction (\textit{e.g.}, presence of a neutral/adversarial observer, impossibility of offline attacks, \textit{etc.}).

We continue by explaining why we \textbf{need} to make such a change.
Applying the approach of \citet{kirchenbaueretal}~directly in game playing can be dangerous, as the attacker, trying to get a recommendation for $h$, can pass another history $h'$ with a different final action that nonetheless produces the same observation.
By doing so, the attacker would be able to get a recommendation for $h$ while bypassing the watermark.
That said, using the observation at $h$ instead of the history itself is not fully robust either, as the attacker can instead pass another history $h''$ that results in a similar albeit different observation.
While the attacker can theoretically bypass the watermark this way, later in the paper, we discuss why this is risky for the attacker and propose how such an attack can be countered.

Then, we use the pseudo-random number generator to randomly partition the available actions into a \textcolor{Green}{\textbf{green}} list $G$ of size $\gamma|A(h)|$ and a \textcolor{Red}{\textbf{red}} list $R$ of size $(1 - \gamma)|A(h)|$, and add $\delta$ to the (estimated) expected utilities corresponding to the actions in $G$ while subtracting $\gamma\delta/(1 - \gamma)$ to those in $R$.
Here, unlike the KGW watermark, we also subtract from the values corresponding to the \textcolor{Red}{\textbf{red}}-list actions to ensure that the adjustment to every (estimated) expected utility is zero in expectation.
Finally, the action yielding the maximum adjusted (estimated) expected utility is selected, which is then returned.
The pseudocode of the watermarking process is shown in Algorithm~\ref{alg:watermark}.
When $\delta \to \infty$, the watermark only returns \textcolor{Green}{\textbf{green}}-list actions, and, when $\delta \to 0^+$, the watermark only plays a tiebreaking role.

\subsection{Detecting the watermark}

As in the KGW watermark, the null hypothesis is that a particular player applies actions with no knowledge of the \textcolor{Green}{\textbf{green}} and \textcolor{Red}{\textbf{red}} lists.
Let $n$ be the total number of actions from the player and $n_G$ be the number of \textcolor{Green}{\textbf{green}}-list actions from the player.
We have the following z-score for an arbitrary $\gamma$:
\[
	z = \frac{n_G - \gamma n}{\sqrt{n \gamma (1 - \gamma)}}.
\]
The z-score reflects how many more (or fewer) \textcolor{Green}{\textbf{green}}-list actions are applied than the number expected from a player oblivious to the watermark.
\citet{kirchenbaueretal}~suggested the z-score threshold of $z = 4$, corresponding to approximately $3 \times 10^{-5}$ probability of getting a false positive.

\subsection{Watermarking expected utilities}

\begin{algorithm}[t!]
	\caption{Watermark for expected utility}
	\label{alg:expected-utility}
	\textbf{Input:} strategy profile $\sigma$, \textcolor{Green}{\textbf{green}} list size $\gamma \in (0, 1)$, and hardness $\delta \ge 0$.
	\BlankLine
	\textsc{ExpectedUtility}(history $h \in H$): \\
	\Indp{
		Query the underlying model of $\sigma$ for the expected utility $u$ at $h$. \\
		\BlankLine
		$v \coloneq u$\;
		\BlankLine
		\For{each $h' \cdot a \sqsubseteq h$}{
			\If{\textsc{NextAction} $h'$) partitions $A(h')$ s.t. $a \in G$}{
				$v \coloneq v + \delta$\;
			}
			\Else{
				$v \coloneq v - \gamma\delta/(1 - \gamma)$\;
			}
		}
		\BlankLine
		\Return $v$\;
	}
	\Indm
\end{algorithm}

The main watermark introduced above can be thought of as a black box that takes a history as an input and outputs a \textbf{recommended action}.
As alluded to in the introduction, in game playing, a user may not be satisfied with just being provided with the recommended action.
Indeed, they may also want the associated expected utility for the recommendation.
Moreover, they may want to compare the expected utility of applying this action with that of applying an alternative action as well.
This presents a need for another watermark that takes a history as an input and outputs its \textbf{expected value}.

One cannot simply return the (estimated) expected utility from the underlying model of $\sigma$ without any modification, as an attacker would be able to bypass the watermark by picking an action with the maximum expectation.
A sensible option is to modify the expected utility with the hardness parameter $\delta$ as shown in Algorithm~\ref{alg:expected-utility}.
This is consistent with our main watermark (shown in Algorithm~\ref{alg:watermark}).

However, when this watermarking algorithm is public and $\delta$ is known to the attacker, the attacker would be able to undo the watermark adjustments to obtain the true expected utility returned by the underlying model.
As a consequence, if information about the expected utility is provided to users, the watermark must be deployed privately.
This caveat, although not explored by \citet{kirchenbaueretal}, is also highly relevant in text generation: the KGW watermark must be deployed privately when, alongside sampled tokens, the logits or the probability distributions over the vocabulary are also returned.
The expanded version of the conference paper by~\citet{kirchenbaueretal} discusses how to deploy the KGW watermark privately, and the same technique can be applied to game playing.

\section{Analysis}

In this section, we analyze the KGW watermark adapted to game playing.
For the purpose of analysis, we assume that a) the underlying model of the strategy profile being watermarked gives the true expected utility, b) whether any chosen action is in the \textcolor{Green}{\textbf{green}} or \textcolor{Red}{\textbf{red}} list is independent of the types of other chosen actions, and c) the probability of a watermarked agent selecting a \textcolor{Green}{\textbf{green}}-list action is constant.
Then, we have the following theorem:

\begin{theorem}
	Given a strategy profile $\sigma$, let $\sigma^*$ be its watermarked counterpart, following Algorithm~\ref{alg:watermark}, and suppose that an agent follows $\sigma^*$.
	Furthermore, suppose that whether the watermarked agent chooses a \textcolor{Green}{\textbf{green}}- or \textcolor{Red}{\textbf{red}}-list action at any decision point is independent of the choices made at other decision points, and that the probability of the agent selecting a \textcolor{Green}{\textbf{green}}-list action is a constant $p$.
	Let $n$ be the number of actions played by the agent, and let $L$ be the total loss in utility the agent incurs by applying the watermark.
	Then, for any $t > 0$,
	\[
		\Pr[L < \delta (1 + \gamma/(1 - \gamma)) (pn + t)] > 1 - e^{-2t^2/n^2}.
	\]
	\label{thm:concentration-bound}
\end{theorem}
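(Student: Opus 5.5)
Per-decision loss first. At a decision point $h$, the unwatermarked agent plays $a^\circ=\argmax_a \vec{u}_a$ and the watermarked agent plays $a^*=\argmax_a \vec{v}_a$. Since each $\vec{v}_a$ differs from $\vec{u}_a$ by either $+\delta$ or $-\gamma\delta/(1-\gamma)$, we get
\[
\vec{u}_{a^\circ}-\vec{u}_{a^*}\le \delta+\gamma\delta/(1-\gamma)=\delta(1+\gamma/(1-\gamma)).
\]
This follows from $\vec{v}_{a^*}\ge\vec{v}_{a^\circ}$ together with the largest possible gap between two adjustments. Call this constant $c=\delta/(1-\gamma)$.

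Next I would sharpen the bound depending on which list the chosen action lies in. If $a^*$ is in the red list, then $\vec{v}_{a^*}=\vec{u}_{a^*}-\gamma\delta/(1-\gamma)$, and $a^*$ beats $a^\circ$, whose adjustment is at least $-\gamma\delta/(1-\gamma)$. Hence $\vec{u}_{a^*}\ge \vec{u}_{a^\circ}$, so the loss is zero. In other words, a red-list choice costs nothing, and only green-list choices can incur a loss, each bounded by $c$.

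Under assumption a), the model's values are the true expected utilities of the continuation. So the total loss $L$ in expected utility telescopes along the played trajectory into a sum of per-step losses. (Each step's loss is the drop in the true continuation value from deviating at that step, and later deviations are accounted for by later terms.) This gives $L\le c\,n_G$, where $n_G$ is the number of green-list actions among the $n$ actions played.

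Concentration. By the independence and constant-$p$ hypotheses, $n_G$ is a sum of $n$ i.i.d. Bernoulli$(p)$ variables with mean $pn$. Hoeffding's inequality with each summand in $[0,1]$ gives $\Pr[n_G\ge pn+t]\le e^{-2t^2/n}$. Since $n\ge1$, we have $e^{-2t^2/n}\le e^{-2t^2/n^2}$, which is the weaker form stated. Alternatively, applying Hoeffding to $L$ directly, with each per-step loss in $[0,c]$, yields the form with $n^2$-type scaling after normalizing. Therefore
\[
\Pr[L< c(pn+t)]\ge \Pr[n_G< pn+t]>1-e^{-2t^2/n^2}.
\]
Strictness of the final inequality needs a little care: either use a strict version of Hoeffding, or note that the bound is not tight because the exponent weakened from $n$ to $n^2$. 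If $n$ is itself random, I would condition on $n$.

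The main obstacle is the telescoping step: justifying that the total utility loss is bounded by the sum of per-decision value gaps. With both players possibly watermarked and chance nodes present, the losses at different decision points interact. I would handle this by defining $L$ as the sum of per-step regrets $\vec{u}_{a^\circ}-\vec{u}_{a^*}$ under assumption a), which is how I read the theorem's intended notion of loss, and then argue that this quantity upper bounds the realized expected-utility drop.
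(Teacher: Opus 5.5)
Your proposal is correct and matches the paper's proof: you bound the loss of a single green-list choice by $\delta(1+\gamma/(1-\gamma))$, take $L \le n_G\,\delta(1+\gamma/(1-\gamma))$ (the paper asserts this ``by definition,'' which is your reading of $L$ as a sum of per-step regrets), and apply Hoeffding to the Bernoulli count $n_G$. Your argument that red-list choices cost nothing, and your note that Hoeffding actually gives $e^{-2t^2/n}$ before weakening to the stated $e^{-2t^2/n^2}$, fill in details the paper leaves implicit; drop the aside about applying Hoeffding to $L$ directly, since it is unneeded and unjustified because the per-step losses are not assumed independent.
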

The proof is in Appendix~\ref{sec:concentration-bound}. Relaxing the independence assumption and the assumption that the probability of selecting a \textcolor{Green}{\textbf{green}}-list action is constant, we have the following proposition for the deterministic bound in the loss in utility an agent incurs by applying the watermark:

\begin{proposition}
	Given a strategy profile $\sigma$, let $\sigma^*$ be its watermarked counterpart, following Algorithm~\ref{alg:watermark}, and suppose that an agent follows $\sigma^*$.
	Let $n$ be the number of actions played by the agent, and let $L$ be the total loss in utility the agent incurs by applying the watermark.
	Then,
	\[
		L \le n \delta (1 + \gamma/(1 - \gamma)).
	\]
	\label{pro:bound}
\end{proposition}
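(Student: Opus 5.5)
The plan is to bound the utility lost at each single decision point and then sum these per-step losses along the played path. Since $\sigma$ is assumed to return true expected utilities (assumption a), let $a^\sigma$ be the action maximizing $\vec{u}$ at $h$ and $a^*$ the action returned by Algorithm~\ref{alg:watermark}. I define the per-step loss as $\ell(h) = \vec{u}_{a^\sigma} - \vec{u}_{a^*} \ge 0$. It is zero when $a^* = a^\sigma$.

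The key step is the single-step bound $\ell(h) \le \delta + \gamma\delta/(1-\gamma) = \delta(1+\gamma/(1-\gamma))$. Because $a^*$ maximizes $\vec{v}$, we have $\vec{v}_{a^*} \ge \vec{v}_{a^\sigma}$. Every adjustment lies in the interval $[-\gamma\delta/(1-\gamma), \delta]$. Hence
\[
\vec{u}_{a^*} + \delta \ge \vec{v}_{a^*} \ge \vec{v}_{a^\sigma} \ge \vec{u}_{a^\sigma} - \gamma\delta/(1-\gamma),
\]
which gives the claim directly. The worst case is $a^*$ green and $a^\sigma$ red. In the other cases the loss is even smaller: at most $0$ when both actions carry the same colour, and negative (hence impossible unless $a^* = a^\sigma$) when $a^*$ is red and $a^\sigma$ green.

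Next I telescope over the trajectory. Let $h_0 \sqsubseteq h_1 \sqsubseteq \cdots$ be the histories on the realized play, and let $V(h)$ denote the true value under $\sigma$. At each of the agent's $n$ decision points, deviating from $a^\sigma$ to $a^*$ lowers the continuation value from $V(h_k)$ to $V(h_k\cdot a^*)$, so it costs exactly $\ell(h_k)$ relative to continuing with $\sigma$. At other players' or chance nodes the agent does nothing new, and the expected continuation is unchanged because values there are averages or are determined by the fixed opponent strategies. Writing $L$ as the difference between the value of $\sigma$ and the value of $\sigma^*$ at the root, I would decompose it by a standard performance-difference (hybrid) argument: switch the agent's strategy from $\sigma$ to $\sigma^*$ one decision point at a time. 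Each switch costs at most $\delta(1+\gamma/(1-\gamma))$, and there are $n$ of them, so $L \le n\delta(1+\gamma/(1-\gamma))$.

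I expect the main obstacle to be making this telescoping rigorous when $n$ is random, because of chance nodes and opponent randomness. I would handle it by proving the bound pathwise, interpreting $L$ as the sum of realized per-step losses $\sum_k \ell(h_k)$. Alternatively, I would use induction on the depth of the game tree with the hypothesis $V^\sigma(h) - V^{\sigma^*}(h) \le \delta(1+\gamma/(1-\gamma))\cdot(\text{number of agent moves below } h)$. Here the number of agent moves is bounded by $n$ along every path, or $n$ is taken as the maximum number of agent actions. Chance and opponent nodes preserve the inductive bound by convexity, and agent nodes add one $\ell \le \delta(1+\gamma/(1-\gamma))$.
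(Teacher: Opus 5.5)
Your proposal is correct and follows essentially the paper's route: bound the loss at a single decision by $\delta(1+\gamma/(1-\gamma))$ using the range of the adjustments, then sum over the moves. The paper sums over only the $n_G \le n$ green-list moves, which your case analysis also shows are the only ones that can cost anything, while your argmax inequality chain and your depth-induction (performance-difference) aggregation make rigorous two steps the paper simply asserts.
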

The proof is in Appendix~\ref{sec:bound}.
In general, there exists a \textbf{tradeoff} between detectability and quality in the watermarked strategy profile.
While increasing $\delta$ can help detect the watermark more quickly, \textit{i.e.}, with fewer actions played, doing so causes the agent to potentially lose out on the expected utility.

While the potential loss in expected utility incurred by applying the watermark may be concerning, it may not be as relevant when the underlying model of $\sigma$ only \textbf{estimates} the expected utility (which applies to solvers for large perfect-information EFGs), since the loss in the true expected utility from the inaccuracy of the model can be much greater than the loss incurred by applying the watermark.
In this regard, the watermark can be viewed as simply adding tiny additional noise to the already noisy estimates.
Later in our experiments, we empirically demonstrate that the performance degradation in major chess engines due to watermarking is insignificant.

\section{Experiments}

\begin{table}[t!]
	\small
	\centering
	\caption{
		Results from the UCI engine matches ($\gamma = 0.25$, $\delta = 0.5$ centipawn).
		A negative Elo difference denotes that the watermarked engine is weaker and vice versa.
		LOI is the likelihood that the watermarked engine is weaker than the original.
	}
	\label{tab:experiment}
	\begin{tabular}{l|cccccc}
		\toprule
		\multirow{2}{*}{Engine} & \multirow{2}{*}{Elo difference} & \multirow{2}{*}{LOI} & \multirow{2}{*}{\# draws} & \multicolumn{2}{c}{z-score} & \multirow{2}{*}{AUC} \\
		& & & & NW & W & \\
		\midrule
		asmFish 9 & $-$10.4$\pm$55.3 & 64.5\% & 35 & $-$2.14 & 22.9 & 0.793 \\
		Dragon & $+$10.4$\pm$47.9 & 33.4\% & 51 & 0.151 & 47.0 & 0.790 \\
		Ethereal 14.40 & $-$10.4$\pm$52.6 & 65.2\% & 41 & $-$0.865 & 33.4 & 0.783 \\
		PlentyChess 7.0.0 & $-$6.9$\pm$33.4 & 65.8\% & 76 & 0.262 & 19.4 & 0.715 \\
		Stash v37.0 & $-$20.9$\pm$44.3 & 82.3\% & 58 & $-$0.0464 & 28.8 & 0.787 \\
		Stockfish 17.1 & $+$24.4$\pm$31.0 & 6.3\% & 79 & 0.894 & 28.8 & 0.766 \\
		\bottomrule
	\end{tabular}
\end{table}

\begin{figure}[t!]
	\centering
	\hfill
	\begin{subfigure}{.258\textwidth}
		\centering
		\includegraphics[width=\linewidth]{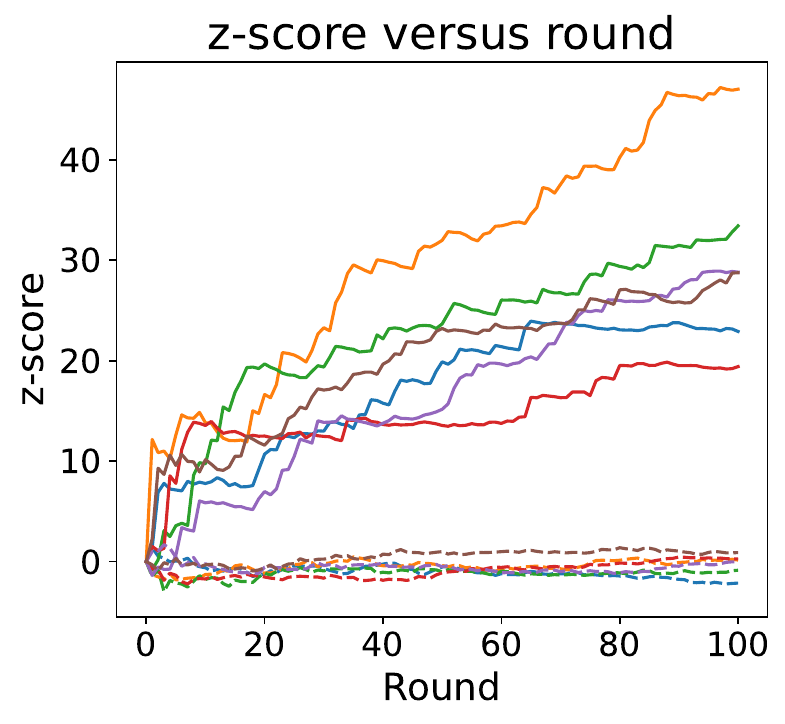}
	\end{subfigure}
	\begin{subfigure}{.284\textwidth}
		\centering
		\includegraphics[width=\linewidth]{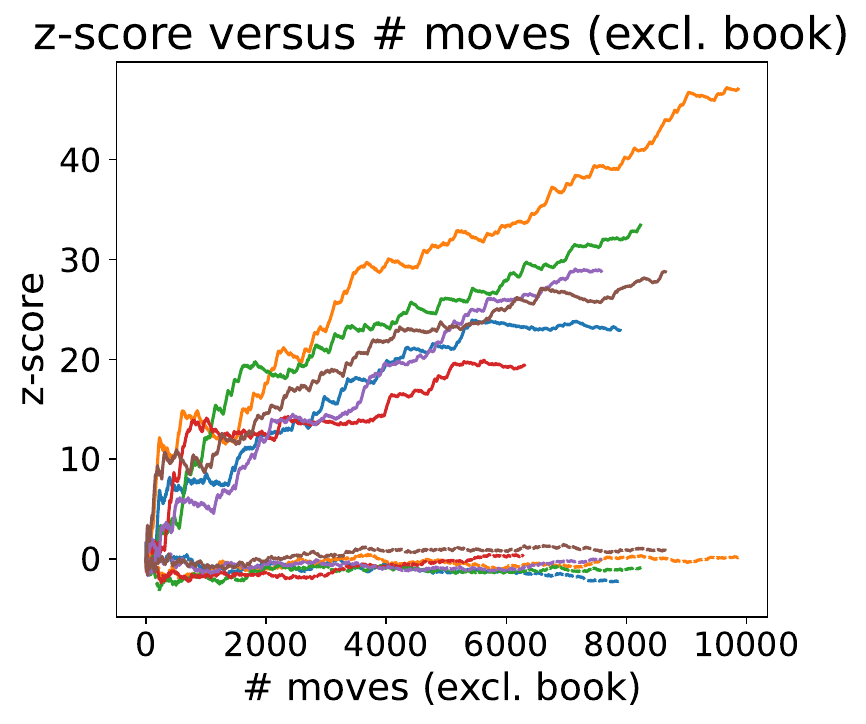}
	\end{subfigure}
	\begin{subfigure}{.411\textwidth}
		\includegraphics[width=0.99\linewidth]{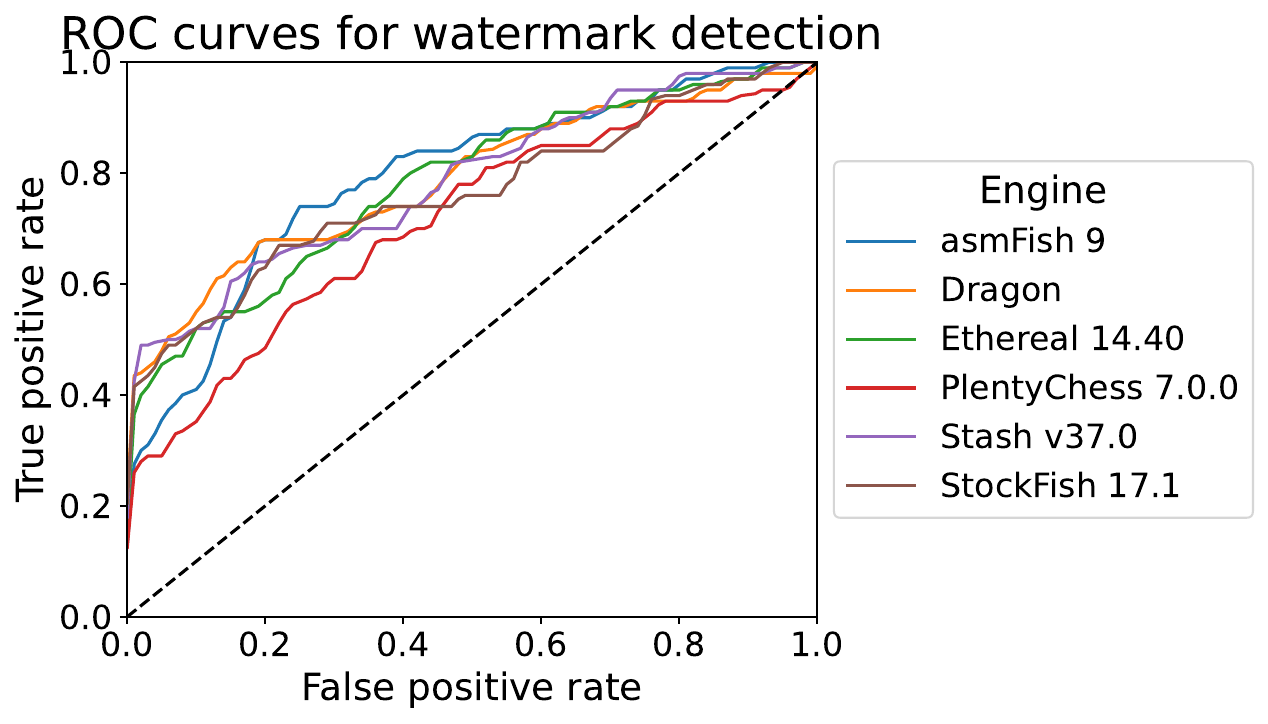}
	\end{subfigure}
	\hfill
	\caption{
		On the left and in the middle, respectively, the z-scores over the number of rounds and moves, respectively.
		On the right, the ROC curves for watermark detection (for each round).
		Solid and dotted lines, respectively, represent watermarked and original chess engines, respectively.
		The AUC values of the ROC curves are tabulated in Table~\ref{tab:experiment}.
	}
	\label{fig:experiment}
	\label{fig:experiment-auc}
\end{figure}

\begin{table}[t!]
	\small
	\centering
	\caption{
		Results from the Stockfish chess engine matches under parameter ablation.
		A negative Elo difference denotes that the watermarked engine is weaker and vice versa.
		LOI is the likelihood that the watermarked engine is weaker than the original.
		The asterisked values (with $^*$) are from the same match, and are also tabulated in Table~\ref{tab:experiment} as entries for Stockfish.
	}
	\label{tab:ablation}
	\begin{tabular}{cc|cccccc}
		\toprule
		\multirow{2}{*}{$\gamma$} & \multirow{2}{*}{$\delta$} & \multirow{2}{*}{Elo difference} & \multirow{2}{*}{LOI} & \multirow{2}{*}{\# draws} & \multicolumn{2}{c}{z-score} & \multirow{2}{*}{AUC} \\
		& & & & & NW & W & \\
		\midrule
		0.1 & 0.5 & $+$17.4$\pm$36.7 & 17.7\% & 71 & $-$0.665 & 7.87 & 0.671 \\
		0.25 & 0.5 & $+$24.4$\pm$31.0$^*$ & 6.3\%$^*$ & 79$^*$ & 0.894$^*$ & 28.8$^*$ & 0.766$^*$ \\
		0.5 & 0.5 & $-$20.9$\pm$30.3 & 91.0\% & 80 & $-$0.693 & 14.7 & 0.728 \\
		0.75 & 0.5 & $-$3.5$\pm$34.1 & 57.9\% & 75 & 0.539 & 6.60 & 0.656 \\
		0.9 & 0.5 & $+$6.9$\pm$37.4 & 35.8\% & 70 & $-$1.41 & 2.10 & 0.592 \\
		\midrule
		0.25 & 0.5 & $+$24.4$\pm$31.0$^*$ & 6.3\%$^*$ & 79$^*$ & 0.894$^*$ & 28.8$^*$ & 0.766$^*$ \\
		0.25 & 1 & $+$10.4$\pm$32.7 & 26.6\% & 77 & 0.950 & 26.2 & 0.760 \\
		0.25 & 2 & $+$17.4$\pm$29.6 & 12.6\% & 81 & 0.224 & 34.3 & 0.850 \\
		0.25 & 5 & $-$24.4$\pm$39.1 & 88.8\% & 67 & $-$0.376 & 42.4 & 0.917 \\
		0.25 & 10 & $-$88.7$\pm$41.7 & 100.0\% & 61 & 0.600 & 40.5 & 0.962 \\
		\bottomrule
	\end{tabular}
\end{table}

\begin{figure}[t!]
	\centering
	\begin{minipage}[t]{.485\textwidth}
		\centering
		\hfill
		\begin{subfigure}{.485\textwidth}
			\centering
			\includegraphics[width=\linewidth]{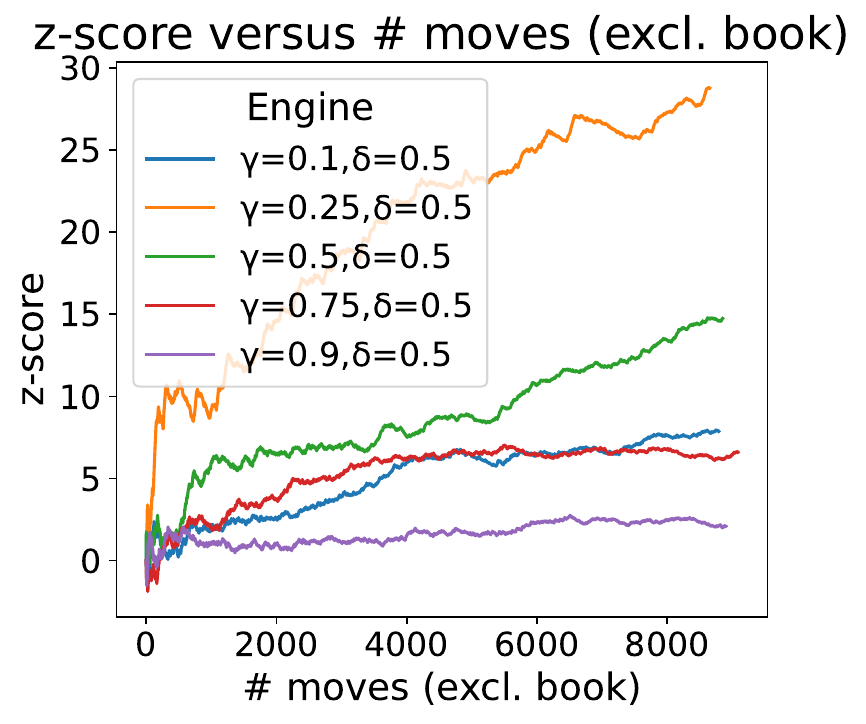}
		\end{subfigure}
		\begin{subfigure}{.485\textwidth}
			\centering
			\includegraphics[width=\linewidth]{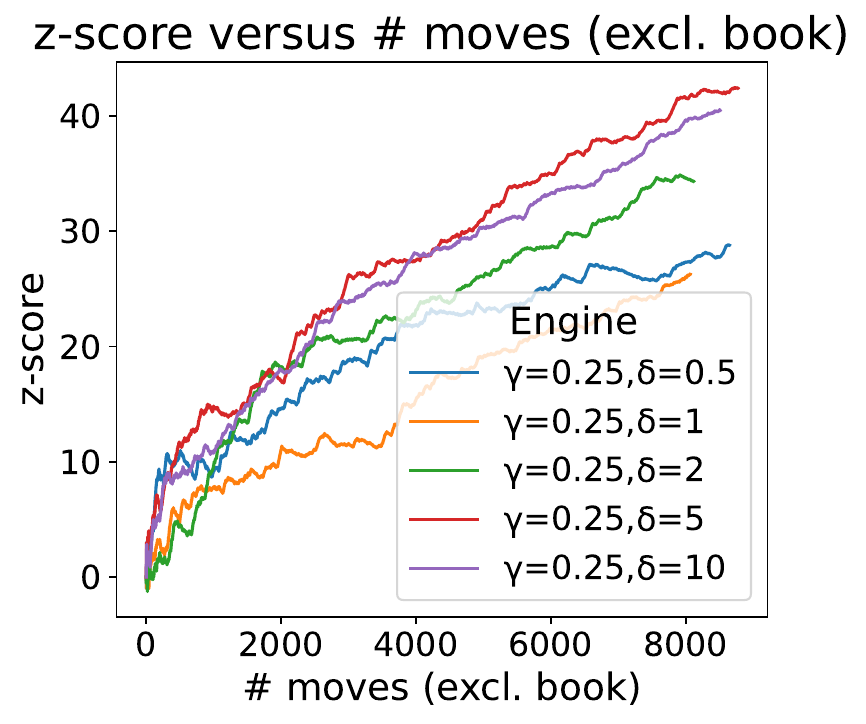}
		\end{subfigure}
		\captionsetup{margin={1em, 1em}}
		\caption{
			z-scores for watermarked Stockfish 17.1.
			The left and right plots ablate on $\gamma$ and $\delta$, respectively.
		}
		\label{fig:ablation}
	\end{minipage}
	\hfill
	\begin{minipage}[t]{.485\textwidth}
		\centering
		\begin{subfigure}{.485\textwidth}
			\centering
			\includegraphics[width=\linewidth]{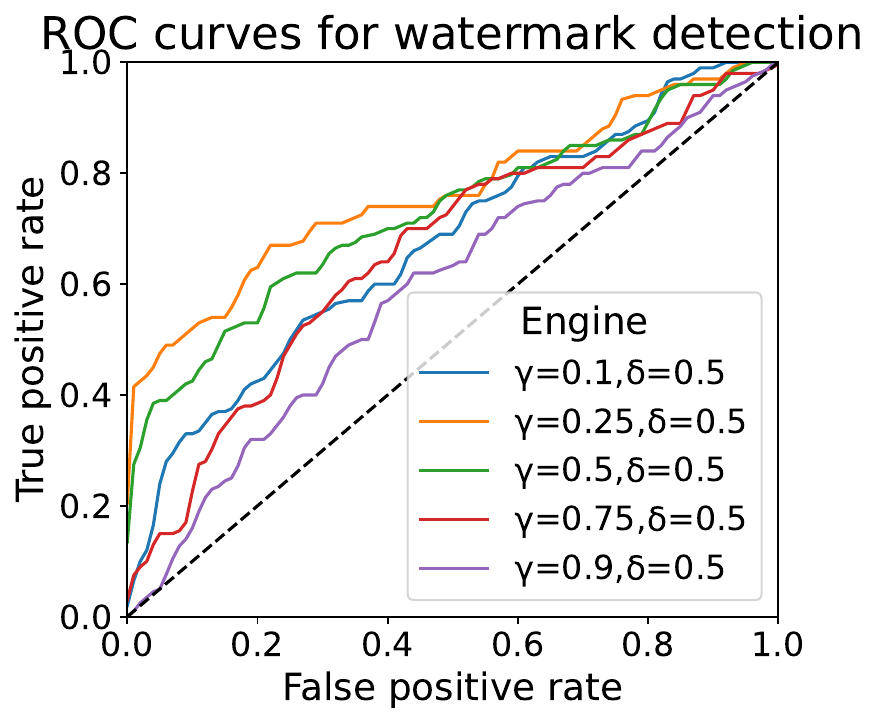}
		\end{subfigure}
		\begin{subfigure}{.485\textwidth}
			\centering
			\includegraphics[width=\linewidth]{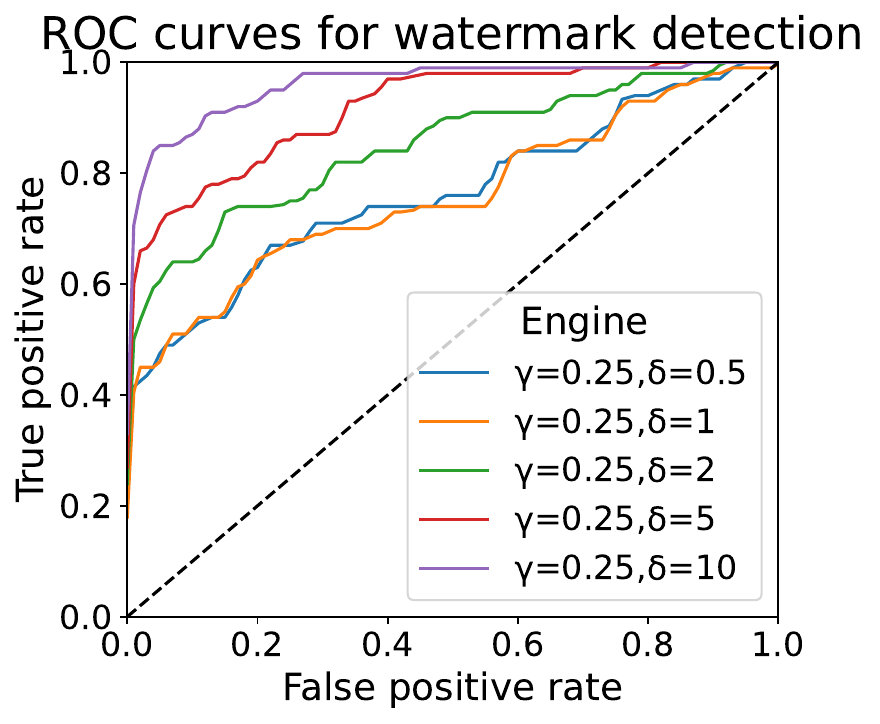}
		\end{subfigure}
		\hfill
		\captionsetup{margin={1em, 1em}}
		\caption{
			The ROC curves for watermark detection (for each round).
			The AUC values are tabulated in Table~\ref{tab:ablation}.
		}
		\label{fig:ablation-auc}
	\end{minipage}
\end{figure}

One notable property of the KGW watermark is that it can easily be bootstrapped to existing LLMs.
The same applies to the watermark when adapted from text generation to game playing.
In our experiments, we bootstrapped the watermark to six selections of popular universal chess interface (UCI) engines: asmFish 9, Dragon, Ethereal 14.40, PlentyChess, Stash v37.0, and Stockfish 17.1.
To calculate the degradation in the quality of the strategy profile introduced by watermarking, as quantified by the loss in expected utility, we pitted each chess engine against its watermarked counterpart.
We seek to demonstrate that a) the watermark does not significantly harm the chess engine performance and b) the watermark can be detected after relatively few rounds of play.

In our experiments, each match consisted of 100 rounds (\textit{i.e.}, complete games), beginning from 8-move positions in the Stockfish chess books dataset, under the time control of 40 moves in 60 seconds.
We calculated the Elo difference resulting from the watermark, as well as the likelihood that the watermarked engines are inferior in strength to their original counterparts.
We also calculated draw probabilities and z-scores from the gameplay data.
Making UCI engines output multiple actions typically incurs some computational cost.
To keep a level playing field, we used the same engine parameters for both watermarked and non-watermarked engines.
The watermark parameters were set as $\gamma = 0.25$ and $\delta = 0.5$ centipawn, which we later found to be ideal in our ablation studies.

The experimental results are tabulated in Table~\ref{tab:experiment}.
Even after 100 rounds, the Elo differences between the watermarked and non-watermarked chess engines are well within the margins of error.
Owing to the stochastic nature of computer chess matches, several watermarked engines are seen even to outperform their non-watermarked counterparts.
Additionally, for every chess engine, the likelihood that its watermarked counterpart is inferior (LOI) fails to reach the 90\% confidence level.
We interpret these results as empirical evidence that the watermark does not significantly harm performance.

The z-scores of the watermarked engines, calculated from the entire match, are far above the suggested $z = 4$ threshold, whereas those of the non-watermarked engines do not meet the threshold.
Recall that $z=4$ corresponds to the false-positive rate of approximately $3 \times 10^{-5}$.
Figure~\ref{fig:experiment} shows how the z-scores change over the number of rounds played.
It can be seen that the z-scores of the watermarked chess engines cross the suggested $z = 4$ threshold relatively early into the match.
The Dragon chess engine passes the threshold only after playing a single round, while asmFish 9 and Stockfish 17.1 do so after finishing Round 2.
The rest hit the threshold after Rounds 4, 8, and 9.

Having watermarked chess engines reach the z-threshold in just a few rounds can be advantageous to a game-playing platform for the sake of security and ecology, as it limits the potential harm inflicted on other legitimate players on the platform.
That said, the data suggests that simply monitoring a potentially malicious actor for a single round of chess may not be enough; instead, multiple rounds must be played to ascertain whether or not a player is cheating with statistical significance.
Indeed, for each round played, we swept the z-score thresholds and found that the area under the receiver operating characteristic (ROC) curves (AUCs) averaged between $0.7$ and $0.8$, which is not ideal.
The ROC curves and their AUC values are in Figure~\ref{fig:experiment-auc} and Table~\ref{tab:experiment}.
Later in our ablation studies, we show that the AUC values can be increased, but this penalizes the strength of the watermarked engine.

\section{Ablation Studies}

In our ablation studies, we exclusively use the Stockfish 17.1 chess engine.
First, fixing the hardness to $\delta=0.5$, we iterated over different values of the \textcolor{Green}{\textbf{green}}-list size $\gamma$.
For each set of parameters, we initialized the watermarked chess engine and recorded its match of 100 rounds against the original Stockfish engine.
In our second ablation study, we fixed $\gamma=0.25$, and iterated over different values of $\delta$.
Again, we recorded the statistics of the matches played.

The results are tabulated in Table~\ref{tab:ablation}.
When iterating over the \textcolor{Green}{\textbf{green}}-list size $\gamma$ (with fixed hardness $\delta=0.5$), it can be seen that $\gamma=0.25$ results in the maximum z-score of the watermarked chess engine.
It also results in the maximum AUC of the ROC curve.
Our results show that setting $\gamma=0.25$ is optimal.
More monotonicity is observed when iterating over the hardness $\delta$, with fixed \textcolor{Green}{\textbf{green}}-list size $\gamma=0.25$.
As $\delta$ increases, the LOI approaches $1$, meaning that the watermarked engine is significantly weaker in strength compared to the original Stockfish engine.
An increase in $\delta$ also corresponds to an increase in the z-score of the watermarked engine and the AUC of the ROC curve.
Clearly, increasing $\delta$ facilitates the watermark detection, but it also diminishes the engine's strength.

\section{Discussion}

Here, we justify our choice of adapting the KGW watermark and discuss possible attacks.

\subsection{Use of alternative LLM watermarks}

The primary reason why we chose the KGW watermark for our study was that its logit boosting mechanism makes the watermark particularly applicable to game-playing settings, as boosting the analogue of logits for playing perfect-information EFGs -- expected utilities -- is both simple and intuitive, and, as we have shown in this paper, leads to a highly performant watermark for game-playing agents. Similarly, we would expect the variants~\cite{zhao,takezawaetal,huoetal,wangetal} of the KGW watermark that rely on the core idea of partitioning tokens into \textcolor{Red}{\textbf{red}}- and \textcolor{Green}{\textbf{green}}-lists to remain highly applicable to the game-playing setting, as long as the newly introduced ideas have obvious analogues in the game-playing setting. For example, LTW introduced by~\citet{wangetal} uses sentence embeddings, whose analogues in game playing are history embeddings. That said, there are also LLM watermarks such as SynthID~\cite{dathathrietal}, which operate on tokens sampled from existing LLMs. Since solutions to perfect-information EFGs give pure probability distributions over available actions, the watermark would sample the same action repeatedly; therefore, SynthID and other similar watermarks will fail when adapted to game playing, although a smoothing method can possibly be used to overcome this.

\subsection{Attacking the watermark}
\label{sec:attack}

One advantage of adapting the KGW watermark for use in game playing is that some of its robustness guarantees can be inherited.
This is compounded by the fact that, as we argued in the introduction, game playing prohibits the attacker from modifying previously played actions.
Thus, common offline watermark attack patterns such as the family of paraphrasing, insertion, and replacement attacks are not relevant in this setting.
With that said, we tailored the KGW watermark for the purpose of playing perfect-information games by making a number of modifications to the original algorithm.
In this section, we argue that our changes do not facilitate an attacker to carry out online attacks, which involve the attacker manipulating the watermark each time the attacker is in turn to act.

A notable difference between the KGW watermark and the modified watermark is that we use the observation made at the queried history instead of the last action (or token) to seed the pseudo-random number generator.
To manipulate this, the attacker, prior to querying the watermarked strategy profile, could try to modify the history so that the resulting observation is slightly different from the original observation.
This is risky for the attacker, as, in games like chess, even small differences in piece positions that seem benign at first glance can have large strategic implications at play.
As a result, this attack can severely degrade the quality of the recommendations, which goes against the interest of the attacker.
Also, to counter such an attack, locality-sensitive hashing can be used so that similar observations result in identical hash values.
Since the hash values are identical, the set of available actions will be partitioned identically across similar observations, thus nullifying the attack.

Henceforth, we make the standard assumption~\cite{kirchenbaueretal} that the watermarked strategy profile is deployed privately.
Instead of taking the recommendation from the watermarked strategy profile, the attacker could try to obtain the watermarked (estimated) expected utility for each available action and use it during the action selection.
One possible strategy is selecting the top-k available actions or those above a certain threshold, from which an action is further subsampled to be played.
This strategy will fail, as the attacker's process of sampling the top actions is still biased toward sampling actions in the \textcolor{Green}{\textbf{green}} list.
Among the sampled actions, the attacker has no way of discriminating between \textcolor{Red}{\textbf{red}}- and \textcolor{Green}{\textbf{green}}-list actions, so the watermark will remain detectable in the attacker's gameplay no matter what subsampling strategy is used.
In fact, any strategy that is biased toward selecting actions with higher watermarked (estimated) expected utilities would remain detectable.

To circumvent the watermark, the attacker could avoid using the watermarked strategy profile except in a few crucial situations.
However, as long as the attacker uses the watermarked strategy profile, the gameplay will show that \textcolor{Green}{\textbf{green}}-list actions are played with probability higher than $\gamma$.
As a result, the z-score calculated from the attacker's gameplay will eventually cross the designated z-threshold.
Thus, the watermark will remain detectable, although its detection will require more gameplay data.

\section{Conclusions and future research}

In this paper, we initiated the study of how strategies can be watermarked.
We showed how the widely established KGW watermark~\cite{kirchenbaueretal} for LLMs can be adapted to watermark strategy profiles in perfect-information games.
Game playing introduces new challenges not present in text generation, namely in having the (estimated) expected utilities provided to the user, along with the recommendation.
We showed how the information about (estimated) expected utilities can be watermarked as well.
Also, in our analysis, we bounded the loss in the expected utility introduced by the watermark.
In our experiments, we bootstrapped the watermark onto six prominent chess engines and showed that no significant degradation in the chess engine's strength can be observed.
Then, we discussed possible attacks on the watermark and proposed countermeasures.
Our method can also be applied to single-agent decision making, as it is a special case of perfect-information EFGs.
Possible future work includes extending our work to imperfect-information EFGs.
Imperfect-information settings require one to mix their strategies in order to be optimal, which introduces new challenges not addressed in this paper.
One would need to bound the increase in exploitability due to the watermark.

\section*{Acknowledgements}

This work has been supported by the Vannevar Bush Faculty Fellowship ONR N00014-23-1-2876, National Science Foundation grant RI-2312342, and NIH award A240108S001.
Any opinions, findings, and conclusions or recommendations expressed in this material are those of the authors and do not necessarily reflect the views of the funding agencies.

\bibliographystyle{abbrvnat}
\bibliography{neurips_2026}

%%%%%%%%%%%%%%%%%%%%%%%%%%%%%%%%%%%%%%%%%%%%%%%%%%%%%%%%%%%%

\newpage
\appendix

\section{Pseudocode of the KGW watermark}
\label{sec:kgw-watermark}

The pseudocode of the KGW watermark is shown in Algorithm~\ref{alg:kgw-watermark}.
A textual explanation of the pseudocode is available in Section~\ref{subsec:watermarking-llms}.

\begin{algorithm}[ht!]
	\caption{KGW watermark for text generation}
	\label{alg:kgw-watermark}
	\textbf{Input:} LLM with vocabulary $\mathcal{V}$, \textcolor{Green}{\textbf{green}} list size $\gamma \in (0, 1)$, and hardness $\delta \ge 0$.
	\BlankLine
	\textsc{NextToken}(prompt $v_1, v_2, \hdots, v_{t - 1} \in \mathcal{V}$): \\
	\Indp{
		Query the LLM with prompt $v_1, v_2, \hdots, v_{t - 1}$ to obtain logits $\vec{l} \in \mathbb{R}^\mathcal{V}$ over the vocabulary. \\
		\BlankLine
		Seed a pseudo-random number generator with the hash of $v_{t - 1}$. \\
		\BlankLine
		Randomly partition $\mathcal{V}$ into a \textcolor{Green}{\textbf{green}} list $G$ of size $\gamma|\mathcal{V}|$ and a \textcolor{Red}{\textbf{red}} list $R$ of size $(1 - \gamma)|\mathcal{V}|$. \\
		\BlankLine
		$\vec{p} \coloneq \softmax\left( \vec{l} + \left(\begin{cases} \delta & v \in G \\ 0 & v \in R \end{cases}\right)_{v \in \mathcal{V}} \right)$\;
		\BlankLine
		Sample the next token $v_t \in \mathcal{V}$ from $\vec{p}$. \\
		\BlankLine
		\Return $v_t$\;
	}
	\Indm
\end{algorithm}

\section{Omitted proofs}

This section contains the proofs omitted from the main body of the paper.

\subsection{Proof of Theorem~\ref{thm:concentration-bound}}
\label{sec:concentration-bound}

\begin{proof}
	We can formalize this setting as follows: let $X_1, \ldots, X_n \in \{0, 1\}$ be independent random variables where, for each $i \in [n]$, $X_i = 1$ if the agent chooses a \textcolor{Green}{\textbf{green}}-list action as the $i$\textsuperscript{th} action or $X_i = 0$ if otherwise.
	As per our assumption, we have that $\forall i \in [n]: \Pr[X_i = 1] = p$.
	Denote the number of times the agent selects a \textcolor{Green}{\textbf{green}}-list action by $X = \sum_{i = 1}^n X_i$.
	Note that
	\[
		\mathbb{E}[X] = \mathbb{E}\left[\sum_{i = 1}^n X_i\right] = \sum_{i = 1}^n \Pr[X_i = 1] = \sum_{i = 1}^n p = pn.
	\]
	By Hoeffding's inequality~\cite{hoeffding}, we have that, for any $t > 0$,
	\[
		\Pr[X \ge pn + t] < e^{-2t^2/n^2}.
	\]
	The loss in utility the agent incurs by choosing a single \textcolor{Green}{\textbf{green}}-list action is at most $\delta + \gamma\delta/(1 - \gamma) = \delta (1 + \gamma/(1 - \gamma))$.
	By definition, we have that $L \le X \delta (1 + \gamma/(1 - \gamma))$, so $X < pn + t \implies L < \delta (1 + \gamma/(1 - \gamma)) (pn + t)$.
	Then,
	\[
		\Pr[L < \delta (1 + \gamma/(1 - \gamma)) (pn + t)] \ge \Pr[X < pn + t] = 1 - \Pr[X \ge pn + t] > 1 - e^{-2t^2/n^2},
	\]
	as required.
\end{proof}

\subsection{Proof of Proposition~\ref{pro:bound}}
\label{sec:bound}

\begin{proof}
	Let $n_G$ be the number of \textcolor{Green}{\textbf{green}}-list actions played by the agent.
	The loss in utility the agent incurs by choosing a single \textcolor{Green}{\textbf{green}}-list action is at most $\delta + \gamma\delta/(1 - \gamma) = \delta (1 + \gamma/(1 - \gamma))$.
	Since $n_G \le n$, we have that
	\[
		L \le n_G \delta (1 + \gamma/(1 - \gamma)) \le n \delta (1 + \gamma/(1 - \gamma)),
	\]
	as required.
\end{proof}

\section{Testbench specification}
\label{sec:resources}

Our testbench contains an Intel Core i5-4690, a 4-thread desktop processor, and 16 GB of memory.

%%%%%%%%%%%%%%%%%%%%%%%%%%%%%%%%%%%%%%%%%%%%%%%%%%%%%%%%%%%%

% \newpage
% \input{checklist.tex}

\end{document}